\theoremstyle{plain}
\newtheorem{theorem}{Theorem}[section]
\newtheorem{corollary}[theorem]{Corollary}
\newtheorem{lemma}[theorem]{Lemma}
\newtheorem{remark}[theorem]{Remark}
\newtheorem{definition}[theorem]{Definition}
\DeclareMathOperator{\Tr}{Tr}
\DeclareMathOperator{\id}{id}
\date{}
\begin{document}
\title{Fully quantum arbitrarily varying channel coding for entanglement-assisted communication}

\author{
Paula Belzig,\\ Institute for Quantum Computing and Department of Applied Math, \\ University of Waterloo, Waterloo, ON, Canada, \\ Email: pbelzig@uwaterloo.ca
}

\maketitle

\begin{abstract}
If a sender and a receiver lack precise knowledge about the communication line that connects them, designing a scheme to reliably transmit information becomes more challenging. This has been studied in classical and quantum information theory in the context of compound channel models and arbitrarily varying channel models. However, a fully quantum version of system uncertainty allows for an even more challenging coding scenario with entangled channel uses. This type of model has previously been investigated for classical and quantum capacity. Here, we address the problem of entanglement-assisted capacity in the presence of such system uncertainty. We find that, under the assumption of a finite environment dimension, it is equal to a corresponding compound capacity. Intriguingly, our results imply that in certain fully quantum arbitrarily varying channel models, the entanglement-assisted capacity can be positive while the classical capacity is equal to zero, a phenomenon that does not occur in regular single-channel coding.
\end{abstract}

\section{Introduction}

\noindent Studying communication in the presence of noise in both classical and quantum systems has long been a topic of interest in mathematical theory and practical application. Linked by a fixed channel $T$ that captures the noise affecting the individual symbols in a message, the sender and the receiver aim to construct an encoding procedure that translates the message into a codeword which is robust to the noise and can still be decoded reliably. Then, the asymptotic rate of how many message bits can be transmitted per channel use with vanishing error using the best possible encoding and decoding procedure quantifies the channel's ability to transmit information. 

Communication naturally becomes more difficult when the sender and the receiver do not know the precise nature of the channel that links them. This uncertainty can be understood as a form of noise where the communication channel is susceptible to manipulation by the environment or even a malicious adversary. With only the promise that the channel is from a known family of channels, the goal is to design an encoder and a decoder that will work for any or most channels in the family. If it is assumed that the unkown channel will be the same for each channel use, this is referred to as coding for a \emph{compound channel}; if the channel may vary during each channel use, this is referred to as coding for an \emph{arbitrarily varying channel} (AVC). The intricacies of transmitting information over classical channels in the presence of such uncertainty have been studied in \cite{CK11,Wolfowitz78,BBT60}, where achievable rates and converse bounds under various assumptions have been established.

When the channels, sender and receiver are quantum, more notions of capacity become relevant. Important examples include the \emph{classical capacity} of a quantum channel \cite{Holevo96,SW97}, which measures the channel's effectiveness in transmitting classical information encoded in quantum states, the \emph{quantum capacity }\cite{Lloyd97,Shor02,Devetak03}, where the objective is to transmit quantum information itself through the channel, and the \emph{entanglement-assisted capacity} \cite{BSST99,BSST02}, where the sender and the receiver share quantum entanglement. It is well-known that access to entanglement gives rise to a single-letter capacity formula and leads to higher achievable communication rates in comparison with the unassisted case \cite{BW92,BSST99,BSST02}.



Quantum versions of compound channels and AVC have been introduced in \cite{AB07,ABBN12} for quantum channels from a known set. This can equivalently be viewed as a channel with an additional input state that dictates which channel from the set is used. Similarly to the classical case, this state can be viewed as the state of an environment, or the state of a third party (a \emph{quantum jammer}) perturbing the channel in order to hinder the communication.

However, the quantum nature of the environment also allows for a more general coding scenario with entangled environment states that correlate the channel uses with each other. The study of this, coding for a \emph{fully quantum arbitrarily varying channel} (FQAVC), has been initated in \cite{BDNW18} for the classical and quantum capacity. The entanglement-assisted case was left as an open problem, which we solve in this manuscript.

We find that, as long as the local dimension of this extra input state is finite, the entanglement-assisted capacity of a FQAVC  model reduces to a corresponding compound capacity. In other words, the entanglement-assisted capacity in the face of rather powerful and adversarial noise is, in fact, equal to a capacity under much more limited noise assumptions.

Interestingly, as a consequence of our result and the connection between entanglement and randomness \cite{MAG06,AMP12}, we show that there are FQAVC for which the entanglement-assisted capacity is positive while the classical capacity is zero. This is in stark contrast to the case of single-channel coding, where this is impossible for channels with finite input and output dimension \cite{BSST02,NC00}.


In Section~\ref{sec-definitions} of this manuscript, we give definitions of various notions of system uncertainty in classical and quantum communication. In Section~\ref{sec-coding-thms}, we give single-letter coding theorems for entanglement-assisted capacity in the various cases. In Section~\ref{sec-relations}, we discuss the relations between the various capacity models and point out an interesting consequence about how entanglement can boost the channel capacity.

\section{Communication setups with system uncertainty}
\label{sec-definitions}

\noindent In this section, we review the relevant notions 
and give definitions for the case of entanglement-assisted communication.

\subsection{Classical communication with system uncertainty}
\label{sec-classical}

\noindent The basic setup consists of a set of classical channels $\mathcal{T}=\{T_j\}$. The uncertainty in the coding setup manifests in a sender and a receiver that know this set, but do not know precisely which channel they are linked by.

Assuming that the sender and the receiver are linked by a specific channel $T_j$ from the set during each channel use (equivalently, by $n$ copies of the same channel, $T_j^{\otimes n}$), but not knowing which one, the goal is to design a coding scheme that works for any $j$ (and, in other words, for the worst of the channels from the set). This task is referred to as coding for the compound channel, or coding for a class of channels, and the best possible achievable rate in the asymptotic limit is referred to as the capacity of the compound channel generated by $\mathcal{T}$. If the sender and the receiver choose their codewords with a deterministic relationship to the message, this is referred to as deterministic coding, and we denote the best asymptotic achievable rate for a deterministic coding scheme for a compound channel model by $C_{comp}(\mathcal{T})$. Alternatively, if the sender and the receiver choose independently and uniformly at random from a set of codewords for each message, this is referred to as random coding, and the capacity for this scenario is denoted by $\bar{C}_{comp}(\mathcal{T})$.



On the other hand, the channel that links the sender and the receiver could vary for each channel use, meaning that they communicate via an unknown $n$-fold tensor product $ T_{j_1}\otimes T_{j_2} \otimes\cdots\otimes  T_{j_n}$ of channels from the set $\mathcal{T}$. Designing an encoder and a decoder for this situation is referred to as coding for the AVC generated by $\mathcal{T}$. Then, the corresponding capacities are denoted by ${C}_{AVC}(\mathcal{T})$ for the deterministic coding version and $\bar{C}_{AVC}(\mathcal{T})$ for the random coding version.


These concepts were introduced in order to study situations where the environment may interfere with the communication channel. For example, this could refer to a wire from a known material in an environment where the sender and the receiver do not know the precise temperature, which affects the wire's capacity. If the temperature stays the same but is unknown for the duration of the communication, this corresponds to a compound channel, and if the temperature may vary throughout the communication process, this corresponds to an AVC. However, beyond benign environmental influences, it could also be an adversary that controls the temperature in order to disrupt the communication. In reference to \cite{MS81,McEliece83}, such an adversary can be referred to as a \emph{jammer}. For the precise definitions and further exploration, we refer to \cite{CK11,Wolfowitz78,BBT60,McEliece83}.

As it turns out, many of these communication scenarios lead to the same best achievable communication rates in the asymptotic limit: For any set of channels $\mathcal{T}$, we have \[\bar{C}_{AVC}(\mathcal{T})=\bar{C}_{comp}(\mathcal{T})={C}_{comp}(\mathcal{T})\]
where the first equality is due to \cite{BBT60} and the second equality is due to \cite{Ahlswede78}.

Interestingly, one case differs: For AVC, random codes can generally outperform deterministic codes, and we have
    \begin{align*}
        {C}_{AVC}(\mathcal{T})= \begin{cases}
            \bar{C}_{AVC}(\mathcal{T}) & \text{ if } {C}_{AVC}(\mathcal{T})>0 \\
            0 & \text{otherwise}
        \end{cases}.
    \end{align*}

There exist channels for which ${C}_{AVC}(\mathcal{T})=0$ while $\bar{C}_{AVC}(\mathcal{T})>0$. An example of such a channel, based on \cite{BBT60}, is the binary adder channel described in \cite[Example~2]{CN88}. More generally, \cite{Ericson85} proved that a symmetrizability condition on the channel set implies ${C}_{AVC}(\mathcal{T})=0$, and \cite{CN88} proved that this symmetrizability is a necessary and sufficient condition. Furthermore, \cite[Appendix]{Ericson86} showed that there exist channels that fulfill this criterion while having $\bar{C}_{AVC}(\mathcal{T})>0$. A similar separation also manifests for quantum channels and has implications that we discuss in Section~\ref{sec-relations}.


\subsection{Uncertainty about the quantum channel}

\noindent A quantum channel is a completely positive and trace preserving map $T: \mathcal{M}_{d_A}\rightarrow \mathcal{M}_{d_B}$ where $\mathcal{M}_d$ denotes the matrix algebra of complex $d\times d$-matrices. For such channels, similar notions of system uncertainty can arise, and are explored in\cite{ABBN12,BCDN16,BJK17}.

\begin{definition}[Compound quantum channel]
Let $\mathcal{J}=\{1,...,J\}$. For $j\in \mathcal{J}$, let $T_j:\mathcal{M}_{d_A} \rightarrow \mathcal{M}_{d_B}$ be a quantum channel and let  $\mathcal{T}=\{T_j \}_{j}$ be a set of quantum channels. 
Then, the compound channel generated by $\mathcal{T}$ refers to the set \[\{T_j^{\otimes n} | j\in \mathcal{J},n\in\mathbbm{N}\}.\]
\end{definition}

\begin{definition}[{Arbitrarily varying quantum channel (AVQC)}]
Let $\mathcal{J}=\{1,...,J\}$. For $j\in \mathcal{J}$, let $T_j:\mathcal{M}_{d_A} \rightarrow \mathcal{M}_{d_B}$ be a quantum channel and let  $\mathcal{T}=\{T_j \}_{j}$ be a set of quantum channels. 
Then, the arbitrarily varying quantum channel generated by  $\mathcal{T}$ refers to the set
\[ \{ T_{j_1}\otimes T_{j_2} \otimes\cdots\otimes  T_{j_n} | j_k\in \mathcal{J} \forall k,n\in\mathbbm{N}\}.\]
\end{definition}

These models can be understood as a fixed quantum channel  $T:\mathcal{M}_{d_A}\otimes \mathcal{M}_{d_S} \rightarrow \mathcal{M}_{d_B}$ where the input to one subsystem, $\mathcal{M}_{d_S}$ with dimension $d_S$, is governed by some outside influence, be it an environment or a quantum jammer, who knows the code but not the codeword. The jammer can control each of the $n$ channel uses by choosing an input state in that subsystem. However, even more generally, in a quantum scenario, the jammer could insert an entangled input state $\sigma\in\mathcal{M}_{d_S}^{\otimes n}$ that introduces correlations between the $n$ separate channel uses. This version of system uncertainty was first studied in \cite{BDNW18}.



\begin{definition}[{Fully quantum arbitrarily varying channel (FQAVC)}]
Let $T:\mathcal{M}_{d_A}\otimes \mathcal{M}_{d_S} \rightarrow \mathcal{M}_{d_B}$ be a quantum channel and let $\sigma\in\mathcal{M}_{d_S}^{\otimes n}$ be a quantum state. Then, the fully quantum arbitrarily varying channel generated by $T$ refers to the channel 
$T^{\otimes n} (\cdot\otimes \sigma):\mathcal{M}_{d_A}^{\otimes n} \rightarrow \mathcal{M}_{d_B}^{\otimes n}$.
\end{definition}

\begin{remark}
    The compound channel and the AVQC are special cases of the FQAVC where the state $\sigma$ of the environment is restricted. Restricting to tensor power states $\sigma=\tau^{\otimes n}$, the FQAVC model is a compound channel with $T_\tau=T(\rho\otimes\tau)$. 
    If we consider tensor powers of varying states, the FQAVC model is an AVQC model.
\end{remark}

\subsection{Entanglement-assisted capacities with system uncertainty}

\noindent In this section, we define entanglement-assisted coding schemes for compound channels, AVQC and FQAVC. The first two were previously considered in \cite{BJK17,BGW17}. Corresponding schemes for classical communication have been introduced and analysed for compound channels in \cite{ABBN12}, for AVQC in \cite{AB07,ABBN12} and for FQAVC in \cite{BDNW18}. For quantum capacity, compound channels and AVQC have been studied in \cite{ABBN12}, and FQAVC were first considered in \cite{KMWY16}.




\begin{definition}[{Entanglement-assisted capacity of a compound channel}] \label{def-ea-comp}
    Consider the compound channel by the set $\mathcal{T}=\{T_j\}_j$ for $T_j:\mathcal{M}_{d_A} \rightarrow \mathcal{M}_{d_B}$ and $j\in \mathcal{J}$ with $\mathcal{J}=\{1,...,J\}$. Then, an $(n,m,k,\epsilon)$-coding scheme for entanglement-assisted communication over the compound channel consists of a pure quantum state $\Phi \in \mathcal{M}_{d_K}^{\otimes 2}$, a CPTP map $\mathcal{E}:\mathbbm{C}^{2^m} \otimes \mathcal{M}_{d_K} \rightarrow \mathcal{M}_{d_A}^{\otimes n}$ and a CPTP map\footnote{The respective definitions in \cite{BJK17} use an equivalent description of a coding scheme where the decoder is represented by a POVM instead of a quantum-classical channel.} $\mathcal{D}:\mathcal{M}_{d_B}^{\otimes n}\otimes \mathcal{M}_{d_K}\rightarrow \mathbbm{C}^{2^m}$, with 
    \[ \sup_{j} \frac{1}{M} \sum_{x=1}^{M} \trace\Big( (\mathbbm{1}-\mathcal{D}) \circ (( T_j^{\otimes n} \circ \mathcal{E})\otimes \id_k ) (\ketbra{x} \otimes \Phi)\Big) \leq \epsilon .\]

If, for some $k$ and for every $n\in \mathbbm{N}$, there exists an $(n,m(n),k,\epsilon(n))$-coding scheme for entanglement-assisted communication, then a rate $R\geq 0$ is called achievable for entanglement-assisted communication via the compound channel generated by $\mathcal{T}$ if $
   R  \leq 
   \liminf_{n\rightarrow \infty} \Big\{ 
    \frac{m(n)}{n} \Big\}
    $
and $\lim_{n\rightarrow \infty} \epsilon(n) \rightarrow 0$.
The entanglement-assisted compound capacity of $\mathcal{T}$ is given by
\begin{align*}
  C^{ea}_{comp}(\mathcal{T}) = \sup \{ R & |  R \text{ achievable for the compound channel} \\ &\text{ generated by }\mathcal{T} \}.
\end{align*}
\end{definition}


\begin{definition}[{Entanglement-assisted capacity of an AVQC}] \label{def-ea-avqc}
    Consider the AVQC generated by the set $\mathcal{T}=\{T_j\}_j$ for $T_j:\mathcal{M}_{d_A} \rightarrow \mathcal{M}_{d_B}$ and $j\in \mathcal{J}$ with $\mathcal{J}=\{1,...,J\}$. Then, an $(n,m,k,\epsilon)$-coding scheme for entanglement-assisted communication over the AVQC consists of a pure quantum state $\Phi$, a CPTP map $\mathcal{E}:\mathbbm{C}^{2^m} \otimes \mathcal{M}_{d_K} \rightarrow \mathcal{M}_{d_A}^{\otimes n}$ and a CPTP map $\mathcal{D}:\mathcal{M}_{d_B}^{\otimes n}\otimes \mathcal{M}_{d_K}\rightarrow \mathbbm{C}^{2^m}$, with coding error
    \[ \sup_{\{j_i\}_{i}} \frac{1}{M} \sum_{x=1}^{M} \trace\Big( (\mathbbm{1}-\mathcal{D}) \circ (( \bigotimes_{i=1}^n T_{j_i} \circ \mathcal{E})\otimes \id_k ) (\ketbra{x} \otimes \Phi)\Big) \leq \epsilon  \]
    where the supremum goes over all sequences $\{j_i\}_{i}$ for $i=1,...,n$ and $j_i\in \mathcal{J}$.
    
If, for some $k$ and for every $n\in \mathbbm{N}$, there exists an $(n,m(n),k,\epsilon(n))$-coding scheme for entanglement-assisted communication, then a rate $R\geq 0$ is called achievable for entanglement-assisted communication via the AVQC generated by $\mathcal{T}$ if $
   R  \leq 
   \liminf_{n\rightarrow \infty} \Big\{ 
    \frac{m(n)}{n} \Big\}
    $
and $\lim_{n\rightarrow \infty} \epsilon(n) \rightarrow 0$.
The entanglement-assisted AVQC capacity of $\mathcal{T}$ is given by
\begin{align*}
  C^{ea}_{AVQC}(\mathcal{T}) = \sup \{ R & |  R \text{ achievable for the AVQC} \\ &\text{ generated by }\mathcal{T} \}.
\end{align*}
\end{definition}

Based on the notions from \cite{ABBN12,BDNW18,CMH22}, we now introduce fully quantum versions of the communication setup. Since it is relevant for our proof strategy for Theorem~\ref{thm-FQAVC}, we differentiate between a capacity for deterministic coding and random coding\footnote{
While we have only stated definitions for deterministic coding here, Definiton \ref{def-ea-comp} and \ref{def-ea-avqc} similarly have random coding versions. As noted in \cite[Remark~1]{BJK17}, the asymptotic achievable rates for random coding are equal to the rates for deterministic coding for entanglement-assistance. We only define the random coding version for FQAVC since it plays a central role in the proof of our main result.

Further, we have chosen to consider communication under an error criterion averaged over the input messages. Alternatively, the averaged error criterion from Definitions \ref{def-ea-comp}, \ref{def-ea-avqc},\ref{def-ea-fqavc-d} and \ref{def-ea-fqavc-r} may be replaced with a maximal error criterion; \cite{BJK17} shows that this leads to the same asymptotic rates for entanglement-assisted communication for compound channels and for AVQC. For the FQAVC, the same holds true and can be seen by substituting the error criterion in the proof of Theorem~\ref{thm-FQAVC}.}. 

\begin{definition}[Entanglement-assisted capacity of an FQAVC, deterministic coding] \label{def-ea-fqavc-d}
    Consider the FQAVC generated by $T:\mathcal{M}_{d_A}\otimes \mathcal{M}_{d_S} \rightarrow \mathcal{M}_{d_B}$. Then, a deterministic $(n,m,k,\epsilon)$-coding scheme for entanglement-assisted communication over the FQAVC consists of a pure quantum state $\Phi$, a CPTP map $\mathcal{E}:\mathbbm{C}^{2^m} \otimes \mathcal{M}_{d_K} \rightarrow \mathcal{M}_{d_A}^{\otimes n}$ and a CPTP map $\mathcal{D}:\mathcal{M}_{d_B}^{\otimes n}\otimes \mathcal{M}_{d_K}\rightarrow \mathbbm{C}^{2^m}$, with coding error
    \begin{align} \label{eq-perr}
     &   p_{err}(T,\mathcal{E},\mathcal{D},\Phi) :=\sup_{\sigma} \frac{1}{M} \sum_{x=1}^{M} \trace\Big( (\mathbbm{1}-\mathcal{D}) \circ (( T^{\otimes n} \circ (\mathcal{E}\otimes \sigma) ) \otimes \id_k) (\ketbra{x} \otimes \Phi)\Big)
    \end{align} 
     with $ p_{err}(T,\mathcal{E},\mathcal{D},\Phi)\leq \epsilon$. 

If, for some $k$ and for every $n\in \mathbbm{N}$, there exists a deterministic $(n,m(n),k,\epsilon(n))$-coding scheme for entanglement-assisted communication, then a rate $R\geq 0$ is called achievable by deterministic coding for entanglement-assisted communication via the FQAVC generated by $T$ if $
   R  \leq 
   \liminf_{n\rightarrow \infty} \Big\{ 
    \frac{m(n)}{n} \Big\}
    $
and $\lim_{n\rightarrow \infty} \epsilon(n) \rightarrow 0$.
The entanglement-assisted FQAVC capacity of $T$ is given by
\begin{align*}
  C^{ea}_{FQAVC}(\mathcal{T}) = \sup \{ R & |  R \text{ achievable by deterministic coding} \\ &\text{ for the FQAVC generated by }T\}.
\end{align*}
\end{definition}

\begin{definition}[Entanglement-assisted capacity of an FQAVC, random coding] \label{def-ea-fqavc-r}
    Consider the FQAVC generated by $T:\mathcal{M}_{d_A}\otimes \mathcal{M}_{d_S} \rightarrow \mathcal{M}_{d_B}$. Then, a random $(n,m,k,\epsilon)$-coding scheme for entanglement-assisted communication over the FQAVC consists of a pure quantum state $\Phi$, CPTP maps $\mathcal{E}_y:\mathbbm{C}^{2^m} \otimes \mathcal{M}_{d_K} \rightarrow \mathcal{M}_{d_A}^{\otimes n}$ and CPTP maps $\mathcal{D}_y:\mathcal{M}_{d_B}^{\otimes n}\otimes \mathcal{M}_{d_K}\rightarrow \mathbbm{C}^{2^m}$ for $y=1,...,Y$, with coding error
        \[ \mathbbm{E}_y p_{err}(T,\mathcal{E}_y,\mathcal{D}_y,\Phi) \leq \epsilon\]
        with $p_{err}(T,\mathcal{E}_y,\mathcal{D}_y,\Phi)$ from Eq.~\eqref{eq-perr}.

If, for some $k$ and for every $n\in \mathbbm{N}$, there exists a random $(n,m(n),k,\epsilon(n))$-coding scheme for entanglement-assisted communication, then a rate $R\geq 0$ is called achievable by random coding for entanglement-assisted communication via the FQAVC generated by $T$ if $
   R  \leq 
   \liminf_{n\rightarrow \infty} \Big\{ 
    \frac{m(n)}{n} \Big\}
    $
and $\lim_{n\rightarrow \infty} \epsilon(n) \rightarrow 0$.
The entanglement-assisted random-coding FQAVC capacity of $T$ is given by
\begin{align*}
  \bar{C}^{ea}_{FQAVC}(T) = \sup \{ R & |  R \text{ achievable by random coding} \\ &\text{ for the FQAVC generated by }T \}.
\end{align*}
\end{definition}

\section{Coding theorems for entanglement-assisted communication with system uncertainty}
\label{sec-coding-thms}

\noindent In this section, we give coding theorems for entanglement-assisted communication under the various assumptions about system uncertainty, including known results from \cite{BGW17,BJK17} and our main result. 

\begin{theorem}[{Entanglement-assisted capacity of a compound channel, \cite[Theorem~1]{BJK17}}] \label{thm-ea-comp} For $j\in\{1,...,J\}$, let $T_j:\mathcal{M}_{d_A} \rightarrow \mathcal{M}_{d_B}$ be a quantum channel. Then, the entanglement assisted compound capacity of the set $\{T_j\}_j$ is given by
   \[ C^{ea}_{comp}(\{T_j\}_j)= \sup_{\rho_{A'A}} \inf_k I(A':B)_{(\id\otimes T_k)\rho_{A'A}}. \]
\end{theorem}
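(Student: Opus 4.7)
The approach is the standard two-part capacity proof: achievability via a universal coding scheme that works against every channel in the set, and a converse via a Fano-plus-additivity argument. For the direct part, I would fix a bipartite input state $\rho_{A'A}$ approximately achieving the supremum and aim for any rate $R < \inf_k I(A':B)_{(\id\otimes T_k)\rho_{A'A}}$. I would use entanglement-assisted position-based coding in the BSST style \cite{BSST02}: pre-share many copies of the purification of $\rho^{\otimes n}_{A'A}$, let Alice transmit message $x$ by routing her share of the $x$-th copy through the channel, and have Bob decode with a POVM that succeeds against every $T_k$ simultaneously. Since $|\mathcal{J}|=J$ is finite, such a universal POVM can be assembled from the $J$ output-typical projectors via the Hayashi--Nagaoka operator inequality followed by a union bound over $k$; the extra factor $J$ in the error is absorbed into the subexponential correction and does not affect the asymptotic rate.

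For the converse, let any $(n,m(n),k,\epsilon_n)$ scheme with rate $R$ be given. Since the scheme must be $\epsilon_n$-reliable against every $T_k$ individually, Fano's inequality together with the quantum data-processing inequality gives, for each $k$,
\[ nR \;\leq\; I(A^{\prime n};B^n)_{(\id\otimes T_k^{\otimes n})\rho^{(n)}} + n\epsilon'_n,\]
where $\rho^{(n)}$ is the effective channel-input state produced by the encoder. Single-channel additivity of the entanglement-assisted mutual information \cite{BSST02} bounds the right-hand side by $n\sup_\rho I(A':B)_{(\id\otimes T_k)\rho}$, and taking the infimum over $k$ yields $R\leq \inf_k \sup_\rho I(A':B)_{(\id\otimes T_k)\rho}$.

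To identify this converse bound with the $\sup_\rho \inf_k$ formula in the theorem statement, I would first reduce to pure inputs via purification: enlarging the reference system of a mixed $\rho_{A'A}$ can only increase $I(A':B)_{T_k,\rho}$ for every $k$ (by strong subadditivity applied to $I(RA';B) = I(A';B) + I(R;B|A')$), so both $\sup$ and $\inf_k$ may be taken over pure bipartite states, which are parametrized by the input marginal $\rho_A$. On this convex domain, $\rho_A \mapsto I(R;B)_{T_k,\rho_A}$ is concave for each $k$ --- a standard fact underlying the single-letter entanglement-assisted capacity, verifiable by unfolding $I(\bar\rho,N)-\sum_i p_i I(\rho_i,N)$ into Holevo quantities and invoking monotonicity of $\chi$ under CPTP maps --- while $T\mapsto I(R;B)_{T,\rho_A}$ is convex by joint convexity of the relative entropy; convexity in $T$ also ensures the infimum over $\mathrm{conv}\{T_k\}$ coincides with the infimum over $\{T_k\}$. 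Sion's minimax theorem then yields $\inf_k \sup_\rho I = \sup_\rho \inf_k I$, closing the loop with the achievability bound. The main technical obstacle I expect is the universal decoder step: the output-typical projectors genuinely depend on $k$, and constructing a single $k$-independent POVM without sacrificing rate is precisely what forces reliance on $|\mathcal{J}|<\infty$.
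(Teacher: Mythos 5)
First, note that the paper does not prove this theorem at all: it is quoted verbatim from \cite[Theorem~1]{BJK17}, so there is no in-paper argument to compare against and your proposal can only be judged on its own terms (and against the strategy of the cited source).

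Your direct part is a reasonable sketch of the standard route (a universal position-based/BSST-type decoder for a finite family, with the factor $J$ absorbed asymptotically), and you correctly flag the construction of a single $k$-independent POVM as the real work. The genuine gap is in the converse. After Fano and data processing you have, for the \emph{same} effective input $\rho^{(n)}$, the bound $nR \leq \min_k I(A'^n;B^n)_{(\id\otimes T_k^{\otimes n})\rho^{(n)}} + n\epsilon'_n$; but you then apply single-channel additivity to each $k$ separately, replacing $\rho^{(n)}$ by the optimizer for $T_k$, which destroys exactly the ``one input state must serve all channels'' structure and leaves you with only $R\leq \inf_k\sup_\rho I$. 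Your attempted repair via Sion's theorem does not close this gap: (i) the claim that convexity of $T\mapsto I(A';B)_{(\id\otimes T)\rho}$ makes the infimum over $\mathrm{conv}\{T_k\}$ coincide with the infimum over $\{T_k\}$ is backwards --- a \emph{convex} function on a polytope attains its minimum in the interior in general, not at a vertex (that property belongs to concave functions) --- and (ii) even granting minimax, what you would obtain is $\sup_\rho\inf_{T\in\mathrm{conv}\{T_k\}} I = \inf_{T\in\mathrm{conv}}\sup_\rho I$, i.e.\ an identification with the \emph{AVQC} formula of Theorem~2 of this paper, which is a genuinely different (generally smaller) quantity than the compound formula $\sup_\rho\inf_k I$; if your chain of equalities were valid, Theorems~1 and~2 of \cite{BJK17} would coincide, which they do not. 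In general $\sup_\rho\inf_k I < \inf_k\sup_\rho I$ strictly (two channels whose optimal inputs differ already exhibit this), so your converse bound is too weak.

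The standard way to repair this, and in the spirit of what \cite{BJK17} does, is to keep the minimum over $k$ inside before single-letterizing: write $\min_k I = \min_q \sum_k q_k I(\cdot)_{T_k}$ over distributions $q$ on $\{1,\dots,J\}$ (here the function is \emph{linear} in $q$ and concave in the input, so Kneser/Sion applies legitimately), recognize $\sum_k q_k I(A';B)_{T_k,\rho}$ as the entanglement-assisted mutual information $I(A';BK)$ of the single flagged channel $N_q(\cdot)=\sum_k q_k T_k(\cdot)\otimes\ketbra{k}_K$, and use additivity of that single channel's EA information to single-letterize uniformly in $k$. That yields $R\leq \min_q\sup_\rho\sum_k q_k I = \sup_\rho\min_k I$ as required.
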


\begin{theorem}[{Entanglement-assisted capacity of a AVQC, \cite[Theorem~2]{BJK17}}]
For $j\in\{1,...,J\}$, let $T_j:\mathcal{M}_{d_A} \rightarrow \mathcal{M}_{d_B}$ be a quantum channel. Then, the entanglement assisted AVQC capacity of the set $\{T_j\}_j$ is given by
   \[ C^{ea}_{AVQC}(\{T_j\}_j)= \sup_{\rho_{A'A}} \inf_{T\in \text{conv}(T_{j_n})} I(A':B)_{(\id\otimes T)\rho_{A'A}}. \]
\end{theorem}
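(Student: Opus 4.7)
My plan is to sandwich $C^{ea}_{AVQC}(\{T_j\})$ between upper and lower bounds that both equal the claimed expression. The key observation is that both bounds reduce to the entanglement-assisted compound-channel capacity of the convex hull $\mathrm{conv}(\{T_j\})$, which can be computed by extending Theorem~\ref{thm-ea-comp} from finite sets to the compact convex set $\mathrm{conv}(\{T_j\})$ via continuity of $I(A':B)_{(\id \otimes T)\rho}$ in $T$ together with a standard $\epsilon$-net argument.

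For the \emph{converse}, I would fix any deterministic $(n,m,k,\epsilon)$-AVQC scheme and show it is automatically a valid compound-channel code for $\mathrm{conv}(\{T_j\})$. Given any probability vector $p=(p_1,\ldots,p_J)$, the associated mixed channel $T_p = \sum_j p_j T_j$ satisfies
\[ T_p^{\otimes n} \;=\; \sum_{\vec j \in \mathcal{J}^n} \Big( \prod_{i=1}^n p_{j_i} \Big)\, T_{j_1}\otimes \cdots \otimes T_{j_n}, \]
and since $p_{err}$ in Eq.~\eqref{eq-perr} is linear in the channel, the error against $T_p^{\otimes n}$ equals a $p^n$-convex combination of the errors against individual sequences $\vec j$, each of which is at most $\epsilon$ by the AVQC assumption. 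Passing to $n\to\infty$ and invoking the convex-hull extension of Theorem~\ref{thm-ea-comp} yields $C^{ea}_{AVQC}(\{T_j\}) \leq \sup_\rho \inf_{T\in \mathrm{conv}(\{T_j\})} I(A':B)_{(\id\otimes T)\rho}$.

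For \emph{achievability}, I would start from an entanglement-assisted compound code $(\mathcal{E}^\star,\mathcal{D}^\star,\Phi^\star)$ for $\mathrm{conv}(\{T_j\})$ whose rate approaches the right-hand side, and robustify it against arbitrary adversarial sequences via Ahlswede-style symmetrization. Concretely, augment the shared entanglement with classical common randomness (which is free in the entanglement-assisted setting) and use it to draw a uniformly random permutation $\pi \in S_n$; the encoder applies $\pi$ to its $n$ channel inputs while the decoder applies $\pi^{-1}$ to the outputs before running $\mathcal{D}^\star$. Against any fixed $\vec j$ the effective channel is the permutation-symmetrization $\tfrac{1}{n!}\sum_{\pi \in S_n} T_{j_{\pi(1)}}\otimes \cdots \otimes T_{j_{\pi(n)}}$, which depends only on the type of $\vec j$, and a quantum de~Finetti / postselection-type bound shows this is close in diamond norm to a convex combination of tensor powers $T^{\otimes n}$ with $T \in \mathrm{conv}(\{T_j\})$. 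The compound code $\mathcal{D}^\star$ then succeeds on each such tensor power, and the equivalence (cited in the footnote) between random and deterministic entanglement-assisted coding converts the resulting randomized scheme into a deterministic one at no rate cost.

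The main technical hurdle is making the symmetrization step quantitative: one must bound, uniformly over the adversary's choice of $\vec j$, the approximation error between $\tfrac{1}{n!}\sum_\pi T_{j_{\pi(1)}}\otimes \cdots \otimes T_{j_{\pi(n)}}$ and an i.i.d.\ mixture drawn from $\mathrm{conv}(\{T_j\})$, with decay in $n$ fast enough to be absorbed into the compound code's vanishing error without harming the rate. A permutation-invariance argument combined with a finite $\epsilon$-net over the compact set $\mathrm{conv}(\{T_j\})$ and a union bound over the types should close the loop.
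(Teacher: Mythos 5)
First, a point of comparison: the paper does not prove this statement itself --- it is imported verbatim from \cite[Theorem~2]{BJK17} --- so there is no in-paper proof to check you against. That said, your two-sided strategy is exactly the standard route (and the one the cited work follows): the converse by noting that, since the error functional in Eq.~\eqref{eq-perr} is affine in the channel, any AVQC code is automatically a compound code for $\mathrm{conv}(\{T_j\})$; and achievability by Ahlswede-style permutation symmetrization of a compound code for the convex hull, followed by derandomization, which is also the skeleton of the paper's own proof of Theorem~\ref{thm-FQAVC}.

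One step of your achievability argument is wrong as stated, though the fix is standard. The permutation-symmetrized channel $\frac{1}{n!}\sum_\pi T_{j_{\pi(1)}}\otimes\cdots\otimes T_{j_{\pi(n)}}$ is \emph{not} close in diamond norm to a mixture of tensor powers: for a sequence $\vec{j}$ of type $t$ it is the uniform average over the type class $T_t$, whereas any i.i.d.\ distribution $p^{\otimes n}$ places only $O\big(n^{-(J-1)/2}\big)$ mass on a single type class, so the distance to any such mixture tends to $1$. What you actually need is the one-sided domination $\mathrm{Unif}_{T_t}\le (n+1)^{J}\,\bar{p}_t^{\,\otimes n}$, where $\bar{p}_t$ is the empirical distribution of $\vec{j}$; because the error is linear and nonnegative in the channel, this yields
\[
\mathbb{E}_\pi\, p_{err}\big(T_{j_{\pi(1)}}\otimes\cdots\otimes T_{j_{\pi(n)}}\big)\;\le\;(n+1)^{J}\,p_{err}\big(T_{\bar{p}_t}^{\otimes n}\big)\;\le\;(n+1)^{J}\,\epsilon
\]
uniformly in $\vec{j}$. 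This is the exact classical analogue of the de Finetti operator inequality $\rho\le (n+1)^{d^2}\tau$ invoked in the proof of Theorem~\ref{thm-FQAVC}, and it is precisely why the compound code's error must decay superpolynomially (exponential decay is available from \cite{BJK17}) --- a requirement you correctly anticipate. With that substitution, plus the routine continuity/$\epsilon$-net extension of Theorem~\ref{thm-ea-comp} from finite sets to the compact set $\mathrm{conv}(\{T_j\})$ and the random-to-deterministic reduction for entanglement-assisted codes, your outline closes.
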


For a combination of a compound channel and a FQAVC with no assumption on the dimension, lower bounds on the entanglement-assisted capacity have been found in \cite{BCMH24}. However, these bounds may be quite suboptimal; in fact, for fully general FQAVC models, the lower bounds are trivial\footnote{In \cite{BCMH24}, the channel models contain an additional parameter $p$ that interpolates between a compound channel and a FQAVC. For any $0\leq p \leq 1$, any quantum channel $T:\mathcal{M}_{d_A}\rightarrow \mathcal{M}_{d_B}$, any quantum channel $N:\mathcal{M}_{d_A}\otimes \mathcal{M}_{d_S} \rightarrow \mathcal{M}_{d_B}$ and any quantum state $\sigma_S$, they define a channel $T_{p,N}=(1-p) (T\otimes \Tr_S) + pN $ and consider communication via $T_{p,N}^{\otimes n} (\cdot\otimes \sigma_S):\mathcal{M}_{d_A}^{\otimes n} \rightarrow \mathcal{M}_{d_B}^{\otimes n}$. This gives achievable rates which are lower bounded by a corresponding compound capacity reduced by some function $g(p)$. For $p\rightarrow 1$, which corresponds to the capacity of the FQAVC $N$, $g(1)$ always exceeds the compound capacity, and the lower bound is therefore zero.}. This is probably a consequence of the fact that, within their model, $d_S$ can be arbitrary and possibly dependent on $n$. Contrastingly, if the jammer's dimension is bounded with $d_S<\infty$ and not dependent on $n$, we can obtain positive achievable rates, and we even obtain equality to the corresponding compound capacity. This is the case that will be the focus of the remainder of this manuscript.

\begin{theorem} \label{thm-FQAVC}
Let $d_S<\infty$. The entanglement-assisted FQAVC capacity of the quantum channel $T:\mathcal{M}_{d_A}\otimes \mathcal{M}_{d_S} \rightarrow \mathcal{M}_{d_B}$ is given by
    \begin{align*}
        \bar{C}^{ea}_{FQAVC}(T)= C^{ea}_{comp}(\{T_{\sigma}\}_{\sigma}).
    \end{align*}
\end{theorem}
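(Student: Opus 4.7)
The theorem follows from two bounds. The converse $\bar{C}^{ea}_{FQAVC}(T)\leq C^{ea}_{comp}(\{T_\sigma\}_\sigma)$ is immediate: any random $(n,m,k,\epsilon)$ FQAVC code, with the supremum in \eqref{eq-perr} restricted to tensor-power jammer states $\sigma=\tau^{\otimes n}$, becomes a random $(n,m,k,\epsilon)$ entanglement-assisted compound-channel code for $\{T_\tau\}_\tau$, so $\bar{C}^{ea}_{FQAVC}(T)\leq \bar{C}^{ea}_{comp}(\{T_\tau\}_\tau)=C^{ea}_{comp}(\{T_\tau\}_\tau)$, where the last equality is the random-equals-deterministic coincidence for entanglement-assisted compound channels noted in the footnote following Definition~\ref{def-ea-fqavc-r}.

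For achievability, fix a rate $R<C^{ea}_{comp}(\{T_\tau\}_\tau)$. The plan is a symmetrisation-plus-post-selection argument in the spirit of the proofs in \cite{BDNW18} for the classical and quantum capacities. First, invoke the achievability half of Theorem~\ref{thm-ea-comp}: a standard random-coding construction produces a random entanglement-assisted compound-channel code $\{(\mathcal{E}_y,\mathcal{D}_y,\Phi)\}_{y=1}^Y$ at rate $R$ whose error decays exponentially in $n$, \emph{uniformly} in the unknown product jammer $\tau\in\mathcal{M}_{d_S}$; the uniformity exploits compactness of the state space (since $d_S<\infty$) via a sufficiently fine $\epsilon$-net and union bound. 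Next, symmetrise this code by composing the encoder and decoder with a uniformly random permutation $\pi\in S_n$ of the $n$ channel uses. Because $T^{\otimes n}$ is covariant under the simultaneous action of $S_n$ on the $A$, $S$, and $B$ registers, the error of the symmetrised random code against an arbitrary $\sigma\in\mathcal{M}_{d_S}^{\otimes n}$ equals its error against the permutation-invariant state $\bar\sigma=\tfrac{1}{n!}\sum_\pi U_\pi\sigma U_\pi^\dagger$.

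Finally, apply the post-selection technique of Christandl--K\"onig--Renner to reduce $\bar\sigma$ to a de~Finetti mixture. Writing the code's error probability as $p_{err}(\sigma)=\Tr(E\sigma)$ for a positive operator $E\preceq\mathbbm{1}$ on $\mathcal{M}_{d_S}^{\otimes n}$ determined by the fixed code and channel, the purified post-selection lemma supplies a probability measure $\nu$ on single-system states such that
\[\bar\sigma \;\leq\; (n+1)^{d_S^2-1}\int \mu^{\otimes n}\,d\nu(\mu),\]
and hence
\[\mathbbm{E}_y\,p_{err}(\bar\sigma)\;\leq\;(n+1)^{d_S^2-1}\sup_\tau\mathbbm{E}_y\,p_{err}(\tau^{\otimes n})\;\leq\;(n+1)^{d_S^2-1}\epsilon_n,\]
after passing to extreme points of the de~Finetti mixture and using the first step. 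The right-hand side tends to $0$, producing a random $(n,\lfloor nR\rfloor,k,o(1))$ FQAVC code and hence achievability of $R$.

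The main technical obstacle will be the post-selection step: one must carefully justify that the error functional of a permutation-invariant code is dominated on \emph{all} permutation-invariant jammer states by its value on a de~Finetti mixture of product states, which typically requires introducing a purification of $\sigma$ on an auxiliary register on which the channel acts trivially. This is exactly where the assumption that $d_S<\infty$ and $n$-independent is indispensable: without it, the prefactor $(n+1)^{d_S^2-1}$ would cease to be sub-exponential and could not be absorbed by the exponential compound-channel error $\epsilon_n$, mirroring the regime separation from the bounds of \cite{BCMH24} discussed just before the theorem.
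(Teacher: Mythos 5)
Your proposal is correct and follows essentially the same route as the paper's proof: symmetrise an exponentially-good entanglement-assisted compound code by a random permutation of the $n$ channel uses, use permutation covariance of $T^{\otimes n}$ to push the permutation onto the jammer state, and then invoke the operator de~Finetti bound $\bar\sigma \leq \mathrm{poly}(n)\int\mu^{\otimes n}d\nu(\mu)$ so that the polynomial prefactor is absorbed by the exponential decay of the compound-channel error. The only differences are cosmetic (you spell out the trivial converse direction and phrase the de~Finetti step via post-selection, whereas the paper cites the de~Finetti inequality directly and starts from the deterministic compound code of \cite{BJK17}).
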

\begin{proof}
    Let $\mathcal{E}$, $\mathcal{D}$ be a $(n,m,k,\epsilon)$-coding scheme for entanglement-assisted compound channel coding. Due to \cite{BJK17}, we can find such a scheme\footnote{Note that the coding scheme for compound channels from \cite{BGW17} does not exhibit sufficiently fast decay of the coding error for our purposes here.} with $\epsilon=2^{-nc}$ for some positive constant $c$. Based on this encoder and decoder, we will construct a random $(n,m,k,\epsilon ')$-coding scheme for the FQAVC with $\epsilon'= (n+1)^{d_S^2}\epsilon$.
    
    Let $\pi$ be a permutation operator that acts on $\mathcal{M}_d^{\otimes n}$ by permuting the $n$ subsystems, and let $U_{\pi}$ be the associated quantum channel acting on a quantum state $\rho\in\mathcal{M}_d^{\otimes n}$ by $U_{\pi}(\rho)=\pi \rho \pi^{-1}$.
    
    Then, for any permutation of $n$ systems, consider the coding scheme given by an encoder $\mathcal{E}_{\pi}=U_{\pi} \circ \mathcal{E}$ and a decoder $\mathcal{D}_{\pi}=\mathcal{D}\circ U_{\pi^{-1}}$. This is a random coding scheme with coding error given by
    \begin{align*}
       \xi:= & \mathbbm{E}_{\pi} p_{err}(T,\mathcal{E}_{\pi},\mathcal{D}_{\pi},\Phi)\\& = \sup_{\sigma} \mathbbm{E}_{\pi} \frac{1}{M} \sum_{x=1}^{M} \trace\Big( (\mathbbm{1}-\mathcal{D}_{\pi})  \circ (( T^{\otimes n} \circ ( \mathcal{E}_{\pi} \otimes \sigma) ) \otimes \id_k) (\ketbra{x} \otimes \Phi)\Big) \\
        & =  \sup_{\sigma} \mathbbm{E}_{\pi}\frac{1}{M} \sum_{x=1}^{M} \trace\Big( (\mathbbm{1}-\mathcal{D}) \circ U_{\pi^{-1}} \circ  (( T^{\otimes n} \circ ( (U_{\pi} \circ  \mathcal{E})\otimes \sigma) ) \otimes \id_k) (\ketbra{x} \otimes \Phi)\Big) .
    \end{align*}
We insert an identity operation by permuting the subsystems of $\sigma$ by $\pi^{-1}$ and undoing the permutation by $\pi$, obtaining 
\begin{align*}
        \xi
        & =  \sup_{\sigma} \frac{1}{M} \sum_{x=1}^{M} \trace\Big( (\mathbbm{1}-\mathcal{D}) \circ (U_{\pi^{-1}} \circ T^{\otimes n} \circ U_{\pi} )  \circ ( \mathcal{E}\otimes \pi \sigma \pi^{-1}) ) \otimes \id_k) (\ketbra{x} \otimes \Phi)\Big) .
    \end{align*}
    Since the $n$ copies of the channel $T$ are permutation invariant, this is equal to
    \begin{align*}
       \xi 
       & =  \sup_{\sigma} \mathbbm{E}_{\pi} \frac{1}{M} \sum_{x=1}^{M} \trace\Big( (\mathbbm{1}-\mathcal{D}) \circ (T^{\otimes n} ) \circ ( \mathcal{E}\otimes \pi \sigma \pi^{-1}) ) \otimes \id_k) (\ketbra{x} \otimes \Phi)\Big) 
        \\& =  \sup_{\sigma} \frac{1}{M} \sum_{x=1}^{M} \trace\Big( (\mathbbm{1}-\mathcal{D}) \circ (T^{\otimes n} )  \circ  ( \mathcal{E}\otimes (\mathbbm{E}_{\pi} \pi \sigma \pi^{-1})) ) \otimes \id_k) (\ketbra{x} \otimes \Phi)\Big) .
    \end{align*}

    Due to the de Finetti theorem \cite{LW17}, any state $\rho\in\mathcal{M}_d^{\otimes n}$ that is invariant under permutations of the $n$ $d$-dimensional subsystems, i.e. $\pi \rho \pi^{-1} = \rho$, fulfills \[\rho \leq (n+1)^{d^2} \tau\]
    with the de Finetti state $\tau= \int  \sigma^{\otimes n} d(\sigma)$ where $d(\sigma)$ is a universal probability measure over the set of mixed states. Since $\mathbbm{E}_{\pi} \pi \sigma \pi^{-1}$ is permutation invariant, we can upper bound $\xi$ by an expression with the de Finetti state each term in the sum corresponds to the coding error of a compound channel, and the coding error for any compound channel is upper bounded by $\epsilon$ because of our choice of $\mathcal{E}$ and $\mathcal{D}$. Thereby, we have $\mathbbm{E}_{\pi}  p_{err}(T,\mathcal{E}_{\pi},\mathcal{D}_{\pi},\Phi)\leq (n+1)^{d_S^2}\epsilon =\epsilon'$.

    Then, since $\epsilon=2^{-nc}$ decays exponentially, as long as $d_S^2<\frac{cn}{\log(n+1)}$, we still have that $\lim_{n\rightarrow\infty}\epsilon'=\lim_{n\rightarrow\infty}(n+1)^{d_S^2} 2^{-nc} = 0$.
\end{proof}


\begin{corollary}
   As established by \cite[Remark~1]{BJK17}, an analogue of Theorem~\ref{thm-FQAVC} is true for entanglement-assisted capacities under deterministic coding, i.e. for any channel $T:\mathcal{M}_{d_A}\otimes \mathcal{M}_{d_S} \rightarrow \mathcal{M}_{d_B}$ (with $d_S<\infty$), we have $C^{ea}_{FQAVC}(T)= C^{ea}_{comp}(\{T_{\sigma}\}_{\sigma})$.
   %
\end{corollary}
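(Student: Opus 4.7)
The plan is to show that in the entanglement-assisted FQAVC setting, random coding and deterministic coding yield the same asymptotic rate, and then invoke Theorem~\ref{thm-FQAVC}. One direction is immediate: every deterministic $(n,m,k,\epsilon)$-scheme is trivially a random scheme with $Y=1$, so $C^{ea}_{FQAVC}(T)\le \bar{C}^{ea}_{FQAVC}(T) = C^{ea}_{comp}(\{T_\sigma\}_\sigma)$.

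For the nontrivial inequality $C^{ea}_{FQAVC}(T) \ge \bar{C}^{ea}_{FQAVC}(T)$, I would derandomize an arbitrary random scheme by absorbing its shared classical randomness into the pre-shared entanglement. Concretely, starting from a random $(n,m,k,\epsilon)$-scheme $\{(\mathcal{E}_y,\mathcal{D}_y)\}_{y=1}^Y$ with shared state $\Phi$, I would enlarge the shared state to $\Phi' := \Phi \otimes \ketbra{\phi_Y}$ where $\ket{\phi_Y} = \frac{1}{\sqrt{Y}}\sum_{y=1}^{Y}\ket{y}\ket{y}$ is maximally entangled on an auxiliary register $K_1''K_2''$. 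I would then define a single deterministic encoder $\mathcal{E}$ which first measures $K_1''$ in the computational basis to obtain $y$, then applies $\mathcal{E}_y$ to the message and the sender's share of $\Phi$; the decoder $\mathcal{D}$ is defined analogously on the receiver's side. Both steps are perfectly valid CPTP maps, so this constitutes a bona fide deterministic scheme.

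Because $\ket{\phi_Y}$ is maximally entangled, the two measurements almost surely return the same uniform outcome $y$, so the action of $(\mathcal{E},\mathcal{D})$ on $\Phi'$ equals the random application of $(\mathcal{E}_y,\mathcal{D}_y)$ on $\Phi$ with $y$ uniform on $\{1,\ldots,Y\}$. The critical observation for the FQAVC model is that the auxiliary register is held entirely by the legitimate parties and is decoupled from the input system on which the jammer places $\sigma$; hence the adversary cannot condition on $y$, and the supremum over $\sigma$ passes through the expectation over measurement outcomes. Consequently $p_{err}(T,\mathcal{E},\mathcal{D},\Phi') = \mathbb{E}_y p_{err}(T,\mathcal{E}_y,\mathcal{D}_y,\Phi) \le \epsilon$, preserving the rate $m(n)/n$ of the original random scheme.

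The main obstacle is the bookkeeping around the entanglement dimension: the augmented shared state $\Phi'$ has local dimension $d_K \cdot Y(n)$, which in general depends on $n$ (in particular $Y(n)=n!$ for the random permutation code built in the proof of Theorem~\ref{thm-FQAVC}). I would handle this by adopting the convention, standard in entanglement-assisted capacity theorems and used in \cite{BJK17}, that the local dimension of the shared entanglement may grow with $n$ without changing the definition of achievable rate. Under this convention, the derandomization above yields $C^{ea}_{FQAVC}(T) \ge \bar{C}^{ea}_{FQAVC}(T)$, and combining with Theorem~\ref{thm-FQAVC} gives the claimed equality.
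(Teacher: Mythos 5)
Your proposal is correct and matches the paper's approach: the paper proves this corollary purely by invoking \cite[Remark~1]{BJK17}, whose content is exactly the derandomization you spell out --- converting shared entanglement into shared uniform randomness to select the code pair $(\mathcal{E}_y,\mathcal{D}_y)$, noting the jammer cannot condition on the outcome (indeed $\sup_\sigma \mathbbm{E}_y \leq \mathbbm{E}_y \sup_\sigma$, so your claimed equality should strictly be a $\leq$, which still suffices). Your handling of the growing entanglement dimension is consistent with the convention used there.
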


\begin{remark}
Similarly to the case in \cite{BDNW18}, this proof strategy can not be employed without the restriction to $d_S<\infty$. While techniques from \cite{CMH22,BCMH24} can somewhat avoid this restriction, they only give a rather loose lower bound that does not lead to achievable rates under worst-case assumptions. For a detailed discussion of this dimension restriction, we refer to \cite[Section~IV]{BDNW18}.
\end{remark}

\section{About the power of entanglement-assistance}
\label{sec-relations}

\noindent In this section, we discuss the relationship between the entanglement-assisted capacity and its unassisted counterpart under system uncertainty assumptions. Based on definitions corresponding to Section~\ref{sec-definitions}, we denote by $C_{comp}(\mathcal{T})$ the compound classical capacity under deterministic coding generated by a set of channels $\mathcal{T}$ from \cite{AB07} and by $ C_{FQAVC}(T)$ the FQAVC classical capacity under deterministic coding of a quantum channel $T$ from \cite{BDNW18}.



\begin{lemma} \label{thm-comp-cap-relations} For $j\in\{1,...,J\}$, let $T_j:\mathcal{M}_{d_A} \rightarrow \mathcal{M}_{d_B}$ be a quantum channel. For any set $\mathcal{T}=\{T_j \}_{j}$, we have
  \[ C_{comp}(\mathcal{T})\leq C_{comp}^{ea}(\mathcal{T})\]
\end{lemma}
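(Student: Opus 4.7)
The plan is to prove the inequality by showing that any unassisted deterministic coding scheme for the compound channel can be regarded as a special case of an entanglement-assisted coding scheme with identical coding error, so that every achievable rate on the left-hand side is also achievable on the right-hand side.

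Concretely, suppose $R$ is achievable in the sense of the classical compound-channel capacity $C_{comp}(\mathcal{T})$, so that for each $n$ there exists an $(n,m(n),\epsilon(n))$-coding scheme consisting of an encoder $\mathcal{E}_c:\mathbbm{C}^{2^{m(n)}}\rightarrow \mathcal{M}_{d_A}^{\otimes n}$ and a decoder $\mathcal{D}_c:\mathcal{M}_{d_B}^{\otimes n}\rightarrow \mathbbm{C}^{2^{m(n)}}$ with $\epsilon(n)\to 0$ and $\liminf_n m(n)/n \geq R$. I would then build an entanglement-assisted scheme in the sense of Definition~\ref{def-ea-comp} by choosing $d_K=1$, a trivial pure ``state'' $\Phi\in\mathcal{M}_1^{\otimes 2}$, and setting $\mathcal{E}=\mathcal{E}_c$, $\mathcal{D}=\mathcal{D}_c$ with the auxiliary $\mathcal{M}_{d_K}=\mathbbm{C}$ factors absorbed. (Equivalently, one can take any finite $d_K$ with a product pure state $\Phi=|0\rangle\!\langle 0|\otimes |0\rangle\!\langle 0|$, and extend $\mathcal{E}_c$ and $\mathcal{D}_c$ to act as the identity on the $\mathcal{M}_{d_K}$ factor before tracing it out.)

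Under this construction, the expression
\[ \frac{1}{M}\sum_{x=1}^{M} \trace\Big( (\mathbbm{1}-\mathcal{D}) \circ \big((T_j^{\otimes n}\circ \mathcal{E})\otimes \id_k\big)(\ketbra{x}\otimes \Phi)\Big) \]
reduces, after tracing out the trivial auxiliary register, to the unassisted coding error of $(\mathcal{E}_c,\mathcal{D}_c)$ for channel $T_j$. Taking the supremum over $j\in\mathcal{J}$ on both sides shows that the entanglement-assisted coding error is exactly $\epsilon(n)$, so the same rate $R$ is achievable in the sense of Definition~\ref{def-ea-comp}. Taking the supremum over achievable $R$ on both sides yields $C_{comp}(\mathcal{T})\leq C_{comp}^{ea}(\mathcal{T})$.

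There is no genuine obstacle here; the only thing to verify carefully is the bookkeeping that the entanglement-assisted definition strictly generalises the unassisted one, i.e.\ that $d_K=1$ (or a product $\Phi$) is a permitted choice in Definition~\ref{def-ea-comp}. Once that is noted, the statement follows immediately from the inclusion of coding schemes.
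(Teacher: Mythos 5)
Your proof is correct, but it takes a genuinely different route from the paper. You argue operationally: an unassisted deterministic code is a special case of an entanglement-assisted code with a trivial (one-dimensional or product) shared state $\Phi$, so the set of achievable rates can only grow and the inequality follows from the inclusion of coding schemes. The paper instead works at the level of capacity formulas: it invokes the multi-letter Holevo-type expression $C_{comp}(\mathcal{T})=\lim_{l\rightarrow\infty}\frac{1}{l}\sup_{p_x,\rho_x}\inf_j I(X:B^l)$ from \cite{BB09} and \cite[Corollary~3]{BDNW18}, observes that enlarging the supremum from classical--quantum states to arbitrary bipartite states can only increase it, and then uses additivity of the quantum mutual information to collapse the regularisation onto the single-letter entanglement-assisted formula of Theorem~\ref{thm-ea-comp}. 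Your argument is more elementary and self-contained (it needs only the nesting of the definitions, which indeed holds since Definition~\ref{def-ea-comp} permits a trivial $\Phi$, and both capacities use compatible error criteria), whereas the paper's argument additionally exhibits \emph{where} the gap between the two capacities originates, namely in the restriction of the input optimisation to classical--quantum states; on the other hand, the paper's route depends on the correctness of both coding theorems being cited, which your simulation argument avoids entirely.
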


\begin{proof}
By \cite{BB09} and \cite[Corollary~3]{BDNW18}, we have have $C_{comp}(\mathcal{T})=\lim_{l\rightarrow \infty} \frac{1}{l} \sup_{p_x,\rho_x} \inf_j I(X:B^l)_{(\id\otimes T_j^{\otimes l})\rho}$.
Then, taking the supremum over a larger set of states can only increase this expression, and thus we have
$
    C_{comp}(\mathcal{T})  \leq \lim_{l\rightarrow \infty} \frac{1}{l} \sup_{\rho}  \inf_j  I(X:B^l)_{(\id\otimes T_j^{\otimes l})\rho}
$. 
Since this mutual information is additive under tensor powers of the channel, this is equal to the entanglement-assisted capacity of the compound channel, as given by Theorem~\ref{thm-ea-comp}.
\end{proof}

\begin{corollary} Let $d_S<\infty$. For any quantum channel $T:\mathcal{M}_{d_A}\otimes \mathcal{M}_{d_S} \rightarrow \mathcal{M}_{d_B}$, we have
  \[ {C}_{FQAVC}(T)\leq C_{FQAVC}^{ea}(T).\]
\end{corollary}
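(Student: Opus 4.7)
The plan is to chain Lemma \ref{thm-comp-cap-relations} together with the preceding corollary (the deterministic analogue of Theorem \ref{thm-FQAVC}). Writing $T_\sigma(\rho):=T(\rho\otimes\sigma)$ for a single-letter jammer state $\sigma\in\mathcal{M}_{d_S}$, the preceding corollary gives the equality $C^{ea}_{FQAVC}(T) = C^{ea}_{comp}(\{T_\sigma\}_\sigma)$, so it suffices to establish the two-step chain
\[ C_{FQAVC}(T) \;\leq\; C_{comp}(\{T_\sigma\}_\sigma) \;\leq\; C^{ea}_{comp}(\{T_\sigma\}_\sigma). \]

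For the first inequality, I would argue directly from the definitions: any deterministic classical coding scheme for the FQAVC generated by $T$ must achieve its error bound against every jammer state $\rho\in\mathcal{M}_{d_S}^{\otimes n}$, and in particular against every tensor-power state $\rho=\sigma^{\otimes n}$. The latter condition is precisely the defining criterion for the same encoder/decoder pair to be a classical coding scheme for the compound channel generated by $\{T_\sigma\}_\sigma$. Hence any rate achievable for the classical FQAVC is automatically achievable for the compound channel, which yields $C_{FQAVC}(T)\leq C_{comp}(\{T_\sigma\}_\sigma)$. The second inequality is then an immediate application of Lemma \ref{thm-comp-cap-relations} to the set $\mathcal{T}=\{T_\sigma\}_\sigma$.

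I do not anticipate a substantive technical obstacle, as each step reduces either to a soft inclusion of one coding model into another or to a direct invocation of an already established result. The only subtlety worth flagging is that the translation $C^{ea}_{FQAVC}(T) = C^{ea}_{comp}(\{T_\sigma\}_\sigma)$ hinges on the assumption $d_S<\infty$, which is exactly the hypothesis inherited by the corollary from the main theorem; without it the de Finetti reduction in the proof of Theorem \ref{thm-FQAVC} would break down.
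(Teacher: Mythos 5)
Your proof is correct and follows the same overall chain as the paper's: $C_{FQAVC}(T)\leq C_{comp}(\{T_\sigma\}_\sigma)\leq C^{ea}_{comp}(\{T_\sigma\}_\sigma)=C^{ea}_{FQAVC}(T)$, with the middle inequality from Lemma~\ref{thm-comp-cap-relations} and the final equality from the deterministic-coding corollary of Theorem~\ref{thm-FQAVC}. The one point of divergence is the first inequality: the paper obtains $C_{FQAVC}(T)\leq C_{comp}(\{T_\sigma\}_\sigma)$ by invoking the dichotomy of \cite[Corollary~6]{BDNW18} (the deterministic FQAVC classical capacity is either zero or equal to the random-coding compound capacity), whereas you argue it directly from the definitions, noting that an FQAVC code must in particular defeat every tensor-power jammer state $\sigma^{\otimes n}$ and is therefore automatically a compound code for $\{T_\sigma\}_\sigma$. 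Your justification is more elementary and self-contained --- it is precisely the observation in the paper's Remark that the compound channel is the restriction of the FQAVC to tensor-power environment states --- and it suffices here, since only the inequality (not the stronger dichotomy) is needed. Your remark about where $d_S<\infty$ enters (only through the equality $C^{ea}_{FQAVC}(T)=C^{ea}_{comp}(\{T_\sigma\}_\sigma)$, via the de Finetti reduction) is also accurate.
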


\begin{proof}
     Since ${C}_{FQAVC}(T)$ is either zero or equal to the random coding compound channel capacity, we have that ${C}_{FQAVC}(T)\leq {C}_{comp}(\{T_{\sigma}\})$ by \cite[Corollary~6]{BDNW18}. Furthermore, due to Theorem~\ref{thm-FQAVC}, we have $C_{FQAVC}^{ea}(T)=C_{comp}^{ea}(\{T_{\sigma}\})$. Then, Lemma~\ref{thm-comp-cap-relations} implies the statement.
\end{proof}



\begin{lemma} \label{thm-examples-exist-1}
    Let $d_S<\infty$. There exists a quantum channel $T:\mathcal{M}_{d_A}\otimes \mathcal{M}_{d_S} \rightarrow \mathcal{M}_{d_B}$ such that ${C}_{FQAVC}({T})=0$ while $C_{comp}(\{T_{\sigma}\}_{\sigma})>0$.
\end{lemma}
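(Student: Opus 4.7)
The plan is to lift a known classical AVC separation to the fully quantum setting. By Ericson's construction~\cite[Appendix]{Ericson86}, there exists a finite classical AVC $\{W_j:\mathcal{X}\to\mathcal{Y}\}_{j\in\mathcal{J}}$ that is symmetrizable in the sense of \cite{CN88}, so $C_{AVC}(\{W_j\})=0$, while $\bar C_{AVC}(\{W_j\})>0$. I will embed this family into a single FQAVC-generating channel $T$ and check that both capacities appearing in the statement reduce to classical quantities of $\{W_j\}$.

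Concretely, take $d_A=|\mathcal{X}|$, $d_S=|\mathcal{J}|$, $d_B=|\mathcal{Y}|$, fix orthonormal bases $\{\ket{x}\}$, $\{\ket{j}\}$, $\{\ket{y}\}$, and define
\[ T(\rho\otimes\sigma) := \sum_{x,y,j} W_j(y|x)\,\bra{x}\rho\ket{x}\,\bra{j}\sigma\ket{j}\,\ketbra{y}. \]
Because $T$ completely dephases both of its inputs in the chosen bases, $T^{\otimes n}(\rho\otimes\sigma)$ depends on $\sigma$ only through the diagonal entries $\bra{\vec j}\sigma\ket{\vec j}$ for $\vec j\in\mathcal{J}^n$, i.e.\ through a classical probability distribution over jammer sequences.

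For the FQAVC side, linearity of the coding error~\eqref{eq-perr} in $\sigma$ means the supremum over $\sigma\in\mathcal{M}_{d_S}^{\otimes n}$ is attained at a computational-basis pure state $\sigma=\ketbra{\vec{j}}$. This identifies $p_{err}(T,\mathcal{E},\mathcal{D},\Phi)$ with the worst-case classical AVC error of the induced code for $\{W_j\}$, so $C_{FQAVC}(T)=C_{AVC}(\{W_j\})=0$. For the compound side, each $T_\sigma(\cdot)=T(\cdot\otimes\sigma)$ is of the form $\sum_j p_j\,T_{\ketbra{j}}$ with $p_j=\bra{j}\sigma\ket{j}$, so $\{T_\sigma\}_\sigma$ is exactly the convex hull of the embedded channels $T_{\ketbra{j}}$. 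Using the multi-letter formula $C_{comp}(\mathcal{T})=\lim_{l\to\infty}\tfrac{1}{l}\sup_{p_x,\rho_x}\inf_j I(X:B^l)$ invoked in the proof of Lemma~\ref{thm-comp-cap-relations}, together with concavity of the classical mutual information in the channel for fixed input distribution, the infimum over the convex hull is attained at an extreme point. Hence $C_{comp}(\{T_\sigma\}_\sigma)=C_{comp}(\{W_j\})$, and by the classical identity $C_{comp}=\bar C_{AVC}$ quoted in Section~\ref{sec-classical} this equals $\bar C_{AVC}(\{W_j\})>0$.

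I expect the main obstacle to be the reduction $C_{FQAVC}(T)\le C_{AVC}(\{W_j\})$: one has to justify that the jammer's extra freedom to use a globally entangled input in $\mathcal{M}_{d_S}^{\otimes n}$ gives no genuine advantage over a classical deterministic jamming sequence~$\vec j$. The key observation is that only the diagonal of $\sigma$ enters the error, so any averaging can only weaken the worst case. A minor subsidiary point is checking that the extremal-point argument applies to the multi-letter compound capacity formula for the embedded family, which reduces to a straightforward application of linearity/concavity in the channel and the classical-to-quantum reduction for fully classical-output channels.
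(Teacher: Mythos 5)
Your overall strategy---embedding a classical symmetrizable AVC with positive random-coding capacity as a doubly dephasing quantum channel---is exactly what the paper intends: its own proof of this lemma is a one-line pointer to the classical examples of Section~\ref{sec-classical} (plus the quantum examples in \cite{BCDN16,ABBN12}), so your write-up is essentially a detailed verification of that pointer. However, one of your steps is wrong as stated. For a fixed input distribution, the classical mutual information $I(p,W)$ is \emph{convex}, not concave, in the channel $W$, so the infimum over the convex hull $\{T_\sigma\}_\sigma=\operatorname{conv}\{T_{\ketbra{j}}\}$ is in general \emph{not} attained at an extreme point, and the claimed identity $C_{comp}(\{T_\sigma\}_\sigma)=C_{comp}(\{W_j\})$ is false in general: for $W_0=\operatorname{id}$ and $W_1$ the bit flip, the compound capacity of the pair is $1$, while the convex hull contains the completely noisy binary symmetric channel and hence has compound capacity $0$. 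What saves your conclusion is that the quantity the lemma actually needs is the compound capacity of the convex hull itself, and $\sup_p\inf_{q}I\bigl(p,\sum_j q_j W_j\bigr)$ is precisely the random-coding AVC capacity $\bar C_{AVC}(\{W_j\})$ of \cite{BBT60}, which is positive for Ericson's example. You should therefore delete the extreme-point argument and identify $C_{comp}(\{T_\sigma\}_\sigma)$ with $\bar C_{AVC}(\{W_j\})$ directly.

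A second, smaller gap sits in the reduction $C_{FQAVC}(T)\le C_{AVC}(\{W_j\})$. You correctly dispose of the jammer's entanglement (only the diagonal of $\sigma$ enters the error, which is affine in it, so the worst case is a deterministic sequence $\vec j$), but the code induced by a quantum encoder and POVM decoder is a \emph{stochastic} classical code: an input state with a non-degenerate diagonal is a privately randomized codeword. That $C_{AVC}(\{W_j\})=0$ for deterministic codes does not by itself force zero capacity for codes with private encoder/decoder randomness. This is fixable---the symmetrizability argument of \cite{CN88} under the average-error criterion applies verbatim to stochastic encoders, since the jammer can sample its fake codeword from the encoder's own distribution---but the point needs to be stated, because it is the only place where the quantum encoder could conceivably outperform the classical deterministic one.
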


\begin{proof}
     For $T$ a classical channel, the examples discussed in Section~\ref{sec-classical} apply. Beyond classical channels, further examples can be found in \cite[Example~5.1]{BCDN16} or \cite[Section~8]{ABBN12}.
\end{proof}

\begin{lemma}
     Let $d_S<\infty$. There exists a quantum channel $T:\mathcal{M}_{d_A}\otimes \mathcal{M}_{d_S} \rightarrow \mathcal{M}_{d_B}$ such that ${C}_{FQAVC}(T)=0$ while ${C}^{ea}_{FQAVC}(T)>0$.
\end{lemma}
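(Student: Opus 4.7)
The plan is to combine the preceding results in a short chain of implications: Lemma~\ref{thm-examples-exist-1} supplies a channel with vanishing deterministic FQAVC classical capacity yet positive compound classical capacity, and Theorem~\ref{thm-FQAVC} (together with its corollary for deterministic coding) converts a positive compound capacity into a positive entanglement-assisted FQAVC capacity. So nothing new needs to be constructed; the work is to verify that the quantities line up in the right direction.

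Concretely, I would proceed as follows. First, pick a channel $T:\mathcal{M}_{d_A}\otimes \mathcal{M}_{d_S} \rightarrow \mathcal{M}_{d_B}$ (with $d_S<\infty$) produced by Lemma~\ref{thm-examples-exist-1}, so that $C_{FQAVC}(T)=0$ while $C_{comp}(\{T_{\sigma}\}_{\sigma})>0$. Second, apply Lemma~\ref{thm-comp-cap-relations} to the compound set $\{T_{\sigma}\}_{\sigma}$ to conclude $C_{comp}(\{T_{\sigma}\}_{\sigma})\leq C^{ea}_{comp}(\{T_{\sigma}\}_{\sigma})$, so that $C^{ea}_{comp}(\{T_{\sigma}\}_{\sigma})>0$. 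Third, invoke the deterministic-coding corollary of Theorem~\ref{thm-FQAVC} (which gives $C^{ea}_{FQAVC}(T)=C^{ea}_{comp}(\{T_{\sigma}\}_{\sigma})$ under the assumption $d_S<\infty$) to transfer the positivity: $C^{ea}_{FQAVC}(T)>0$. Combined with the initial $C_{FQAVC}(T)=0$, this is exactly the desired separation.

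The main conceptual point worth flagging is why this sequence of moves goes through at all, since unlike the single-channel setting one cannot simply invoke $C\leq C^{ea}$ on the FQAVC side directly from the standard capacity inequalities of \cite{BSST02}: the deterministic FQAVC classical capacity is not a priori comparable to the deterministic FQAVC entanglement-assisted capacity through a known additivity argument. The detour via the compound capacity is essential, and it works precisely because (i) without entanglement assistance the jammer's symmetrization can drive $C_{FQAVC}(T)$ to zero (as in Lemma~\ref{thm-examples-exist-1}), whereas (ii) entanglement assistance, in the finite $d_S$ regime, reduces the FQAVC task to the compound task by Theorem~\ref{thm-FQAVC}. There is no real obstacle beyond ensuring the example from Lemma~\ref{thm-examples-exist-1} actually has $d_S<\infty$, which holds for all of the cited constructions (classical channels as in Section~\ref{sec-classical}, \cite[Example~5.1]{BCDN16}, and \cite[Section~8]{ABBN12}).
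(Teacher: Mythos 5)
Your proof is correct and follows essentially the same route as the paper: the paper's own proof is a one-line citation of Lemma~\ref{thm-comp-cap-relations} and Lemma~\ref{thm-examples-exist-1}, and your argument simply makes explicit the implicit chain through $C_{comp}(\{T_{\sigma}\}_{\sigma})>0 \Rightarrow C^{ea}_{comp}(\{T_{\sigma}\}_{\sigma})>0 \Rightarrow C^{ea}_{FQAVC}(T)>0$ via the deterministic-coding corollary of Theorem~\ref{thm-FQAVC}. Your added remark correctly identifies why the detour through the compound capacity is needed, but this is elaboration rather than a different method.
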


\begin{proof}
    This is a consequence of Lemma~\ref{thm-comp-cap-relations} and \ref{thm-examples-exist-1}.
\end{proof}

This highlights a notable aspect of the uncertain communication models we discuss in Section~\ref{sec-definitions}. For standard channel coding (for a single, fixed channel), the classical capacity is only equal to zero for quantum channels that also have entanglement-assisted capacity equal to zero \cite{NC00}. It has been an open question raised in one of the very first papers on entanglement-assisted communication whether entanglement between the sender and the receiver can lead to an unbounded improvement in the rates for finite-dimensional channels, and it is conjectured that the quotient $\frac{C^{ea}(T)}{C(T)}$ is bounded for any channel $T:\mathcal{M}_{d_A}\otimes \mathcal{M}_{d_S} \rightarrow \mathcal{M}_{d_B}$ by a function that depends only on the dimension $d_B$ \cite{BSST02}. For FQAVC and deterministic coding, a corresponding conjecture obviously does not hold, as the quotient $\frac{C_{FQAVC}^{ea}(T)}{C_{FQAVC}(T)}$ can diverge for some channels. 



\section{Conclusion and open questions}

\noindent In this manuscript, we propose a fully quantum version of entanglement-assisted communication with system uncertainty based on \cite{BJK17,BDNW18}. We demonstrate that, under a dimension restriction, the entanglement of a jammer state does not lead to a reduced asymptotic rate for entanglement-assisted communication. We show that it is still true (like in the single-channel scenario) that access to entanglement for the sender and the receiver increases the achievable rates, and it is even true (unlike in the single-channel scenario) that entanglement-assistance can increase it infinitely much.

It is intriguing to study under which assumptions entanglement between the sender and the receiver can and cannot increase achievable communication rates. A similar puzzling phenomenon to the one discussed in the previous section also manifests in \cite{Noetzel20}, where it is shown that entanglement-assistance can boost completely classical communication.

Furthermore, a better understanding of the gap between the bounds achievable for restricted and unrestricted jammer dimensions is a topic that warrants deeper investigation. The proof of our main result relies on offsetting the additional noise by choosing a coding scheme as a basis that has a coding error which decays sufficiently fast, and this strategy is employed (though in different ways) in both this work and to obtain the lower bound in \cite{BCMH24}. An improved lower bound, a proof that it is equal to the compound capacity, or a proof that communication is impossible for unrestricted jammers would be of considerable interest and shed light on the power of entanglement to disrupt communication instead of enhance it. 

\section*{Acknowledgement}

This research was undertaken thanks in part to funding from the Canada First Research Excellence Fund.

\bibliographystyle{marcotomPB}
 \bibliography{references} 

\end{document}